\DeclareFontFamily{OT1}{rsfs}{}
\DeclareFontShape{OT1}{rsfs}{CGNPm}{n}{ <-7> rsfs5 <7-10> rsfs7 <10-> rsfs10}{}
\DeclareMathAlphabet{\mycal}{OT1}{rsfs}{CGNPm}{n}
\global\let\AddToReset=\@addtoreset}
\global\let\AddToReset=\@addtoreset}
\global\let\AddToReset=\@addtoreset}
\newcommand\ben{\begin{enumerate}}
\newcommand\een{\end{enumerate}}
\newcommand\bit{\begin{itemize}}
\newcommand\eit{\end{itemize}}
\newcommand{\red}[1]{\tc{red}{#1}}
\newcounter{mnotecount}[section]
\renewcommand{\themnotecount}{\thesection.\arabic{mnotecount}}
\newcommand{\mnote}[1]
{\protect{\stepcounter{mnotecount}}$^{\mbox{\footnotesize
$
\bullet$\themnotecount}}$ \marginpar{
\raggedright\tiny\em
$\!\!\!\!\!\!\,\bullet$\themnotecount: #1} }
\newcommand{\uHS}{\H}
\newtheorem{theorem}{Theorem}[section]
\newtheorem*{acknowledgement}[theorem]{Acknowledgement}
\newtheorem{lemma}[theorem]{Lemma}
\newtheorem{proposition}[theorem]{Proposition}
\newtheorem{remark}[theorem]{Remark}
\newtheorem{example}[theorem]{Example}
\newtheorem*{proof}{Proof}
\newcommand{\jlcax}[1]{}
\newcommand{\eean}{\nonumber\end{eqnarray}}
\newcommand{\kk}[1]{}
\newcommand{\beq}{\begin{equation}}
\newcommand{\FS}       
                  {F}
\newcommand{\HS} 
       {H_{\mbox{\scriptsize volume}}}
\newcommand{\eeal}[1]{\label{#1}\end{eqnarray}}
\newcommand{\bed}{\begin{deqarr}}
\newcommand{\eed}{\end{deqarr}}
\newcommand{\bedl}[1]{\begin{deqarr}\label{#1}}
\newcommand{\eedl}[2]{\arrlabel{#1}\label{#2}\end{deqarr}}
\newcommand{\bel}[1]{\begin{equation}\label{#1}}
\newcommand{\bea}{\begin{eqnarray}}
\newcommand{\bean}{\begin{eqnarray}\nonumber}
\newcommand{\beal}[1]{\begin{eqnarray}\label{#1}}
\newcommand{\eea}{\end{eqnarray}}
\def\typeout{:<+ #.tex}\include{#}\typeout{:<-}1{\typeout{:<+ #1.tex}\include{#1}\typeout{:<-}}
\newcommand{\be}{\begin{equation}}
\newcommand{\eeq}{\end{equation}}
\newcommand{\ee}{\end{equation}}
\newcommand{\beqa}{\begin{eqnarray}}
\newcommand{\eeqa}{\end{eqnarray}}
\newcommand{\beqan}{\begin{eqnarray*}}
\newcommand{\eeqan}{\end{eqnarray*}}
\newcommand{\ba}{\begin{array}}
\newcommand{\ea}{\end{array}}
\newcommand{\dx}{\,dx}
\newcommand{\warn}[1]
{\protect{\stepcounter{mnotecount}}$^{\mbox{\footnotesize
$
\bullet$\themnotecount}}$ \marginpar{
\raggedright\tiny\em
$\!\!\!\!\!\!\,\bullet$\themnotecount: {\bf Warning:} #1} }
\newcommand{\R}{\mathbb R}
\newcommand{\ptc}[1]{\mnote{{\bf ptc:}#1}}
\newcommand{\beqar}{\begin{deqarr}}
\newcommand{\eeqar}{\end{deqarr}}
\newcommand{\beaa}{\begin{eqnarray*}}
\newcommand{\eeaa}{\end{eqnarray*}}
\newcommand{\step}{\vskip 3mm}
\newcommand{\one}{\mathbbm{1}}
\newcommand{\p}{\partial}
\newcommand{\tint}{\textstyle\int}
\DeclareMathOperator{\Linop}{\tc{blue}{\tilde{P}}}
\newcommand{\Gfull}{\smash{\tilde{\Gintro}}}
\DeclareMathOperator{\Qop}{Q}
\newcommand{\chargesL}{\Qop_{\textnormal{KID}}}
\newcommand{\chargestr}{\accentset{\circ}{Q}}
\DeclareMathOperator{\Gdiv}{\tc{blue}{B}}
\DeclareMathOperator{\Gdivtilde}{\tilde B}
\DeclareMathOperator{\Gdivdiv}{S}
\newcommand{\dom}{\Omega}
\newcommand{\Gtr}{\smash{\accentset{\circ}{\Gdivdiv}}}
\newcommand{\norm}{\tc{magenta}{W}}
\newcommand{\fdiv}{\varphi}
\newcommand{\rhs}{\tc{blue}{f}}
\DeclareMathOperator{\ran}{ran}
\DeclareMathOperator{\nonlin}{N}
\DeclareMathOperator{\Ricci}{R}
\newcommand{\pqed}{\hfill$\Box$}
\newcommand{\noA}{\tc{blue}{A}}
\newcommand{\oA}{\accentset{\circ}{\noA}}
\newcommand{\Atr}{\tau}
\DeclareMathOperator{\tr}{tr}
\DeclareMathOperator{\RicciScalar}{R}
\newcommand{\bb}{\tc{orange}{b}}
\newcommand{\pert}{\tc{orange}{h}}
\newcommand{\newpert}{\tc{orange}{A}}
\DeclareMathOperator{\Gintro}{\tc{orange}{G}}
\DeclareMathOperator{\linscalar}{\tc{blue}{P}}
\DeclareMathOperator{\Piscalar}{\tc{blue}{\Pi}}
\newcommand{\fintro}{\tc{blue}{\varphi}}
\newcommand{\tc}[2]{#2}
\renewcommand{\R}{\mathbbm{R}}
\renewcommand{\H}{\mathbbm{H}}
\renewcommand{\subset}{\subseteq}
\let\oldabstract\abstract
\let\oldendabstract\endabstract
\renewenvironment{abstract}
{%
               {\list{}{\setlength{\leftmargin}{7.9mm} 
                        \listparindent 1.5em%
                        \itemindent    \listparindent%
                        \rightmargin   \leftmargin%
                        \parsep        \z@ \@plus\p@}%
                \item\relax}%
               {\endlist}%
\oldabstract}
{\oldendabstract}
\begin{document}

\title{A Bogovski\v i-type operator \\ for Corvino-Schoen hyperbolic gluing}
\date{}
\author{%
Piotr T. Chru\'{s}ciel, 
Albachiara Cogo, 
Andrea N\"utzi}
\maketitle

\begin{abstract}
We construct a solution operator for the
linearized constant scalar curvature equation at hyperbolic space
in space dimension larger than or equal to  two.
The solution operator has good support propagation properties
and gains two derivatives relative to standard norms.
It can be used for Corvino-Schoen-type hyperbolic gluing,
partly extending the recently introduced 
Mao-Oh-Tao gluing method to the hyperbolic setting.
%
\end{abstract}

\tableofcontents


\section{Introduction}\label{s23XI16.1}

Mao, Oh and Tao \cite{MaoOhTao}
have shown how to simplify the gluing constructions of Corvino and Schoen~\cite{Corvino,CorvinoSchoen2}
for three-dimensional asymptotically flat general relativistic vacuum initial data sets. 
The simplification comes from using Bogovski\v i-type solution 
operators for the linearized equations,
which preserve the support of the sources and have good differentiability properties.
Similar methods \cite{MaoTao} led to simplifications and improvements of the gluing constructions
of Carlotto and Schoen~\cite{CarlottoSchoen} in dimensions larger than or equal to three, in particular
they provide better asymptotics of the glued solutions.  

 The aim of this note is to point out the existence of a Bogovski\v i-type 
solution operator for the 
linearization of the constant scalar curvature equation  
at hyperbolic space, 
with the same good support and differentiability properties. 
It thus can be used in nonlinear problems, 
e.g.~in Corvino-type gluing near hyperbolic space,
and in particular to provide simpler proofs of some of
the small-data gluing results of \cite{ChDelayAH}. 
See Section \ref{s13VIII24} for details. 
\step

The constant scalar curvature equation for an $n$-dimensional
asymptotically hyperbolic Riemannian manifold $(M,g)$ is
\begin{equation}\label{eq:scalcurveq}
\RicciScalar(g) = -n(n-1)
 \,,
\end{equation}
where $\RicciScalar(g)$ is the scalar curvature of the metric $g$.
Recall that \eqref{eq:scalcurveq} is equivalent to the time-symmetric 
general relativistic constraint equations with cosmological constant
$2\Lambda=-n(n-1)$.

Here we consider the case where the manifold is the upper half-space
$$M=\uHS^n=\{x=(x^1,\dots,x^n)\mid x^1>0\}\,.$$
Then a trivial solution of  \eqref{eq:scalcurveq} is given by the hyperbolic metric 
$$\bb=\frac{1}{(x^1)^2} (dx^1\otimes dx^1 +\dots + dx^n\otimes dx^n)\,.$$

We consider the linearization of \eqref{eq:scalcurveq} at $\bb$.
This is given by the under-determined elliptic linear operator 
$\linscalar:C^\infty(\uHS^n,S^2(T^*\uHS^n)) \to C^\infty(\uHS^n,\R)$,
\begin{align} \label{eq:Pdef}
\linscalar(\pert) 
= D^i D^j \pert_{ij} -D^i D_i \pert^{j}{}_{j} - \Ricci^{ij} \pert_{ij}
\,, 
\end{align}
where  $S^2(T^*\uHS^n)$ denotes the bundle of two-covariant symmetric tensors,
$D_i$ and $\Ricci_{ij}$ are the covariant derivative,
respectively the Ricci curvature, of $\bb$,
and where we use the index conventions detailed in Remark \ref{rem:conventions}.

Our main proposition provides a solution operator for the linear equation
\begin{equation}\label{eq:lin}
\linscalar(\pert) = \rhs
\end{equation}
with good support and differentiability properties.
We discuss necessary integrability conditions of this equation
when the source $\rhs$ has compact support in $\uHS^n$,
and one seeks a solution $\pert$ with compact support.
%
Recall that the space of static KIDs for the hyperbolic metric
is given by the kernel of the formal adjoint of $\linscalar$.
Explicitly, this is spanned over $\R$ by the $n+1$ functions 
\begin{equation}\label{9VII24.51}
 {\frac{1+\red{|x|}^2}{2x^1}\,,}
\
 {\frac{1-\red{|x|}^2}{2x^1}\,,}
\ 
\frac{x^2}{x^1}\,,
\ \ldots\,,
\frac{x^{n}}{x^1}\,,
\end{equation}
which we will denote by $\kappa_1,\dots,\kappa_{n+1}$.
For all $\pert$ and all $a=1\dots n+1$ the product $\kappa_a \linscalar(\pert)$
is a divergence, in fact the identity 
\begin{equation}\label{eq:kappaPpert}
\kappa_a \linscalar(\pert) = D^i V^{(a)}_i(\pert)
\end{equation} 
holds where by definition 
\begin{equation*}
V^{(a)}_i(\pert)
= 
\kappa_a (D^{j}\pert_{ij})
-
(D^j \kappa_a) \pert_{ij}
-
\kappa_a( D_i \pert^j{}_j)
+
(D_i\kappa_a) \pert^j{}_j \,.
\end{equation*}
Therefore for all $\pert$ with compact support,
\begin{equation}
\label{eq:kappaP}
\int_{\uHS^n} \kappa_a \linscalar(\pert)\, d\mu_{\bb} = 0
\end{equation}
with $d\mu_{\bb}=\frac{1}{(x^1)^n}dx^1\cdots dx^n$.
These are $n+1$ necessary
integrability conditions for \eqref{eq:lin},
in the compact support case.
We now state the main proposition, which shows in
particular that these integrability conditions are sufficient for existence of a solution with compact support.

For an open subset $\Omega\subset\uHS^n$ we denote by 
$C^\infty_c(\Omega,\R)$ the space of smooth functions with compact
support in $\Omega$, analogously for sections of $S^2 (T^*\dom)$.

\begin{proposition}\label{prop:main}
Let $\dom\subset\uHS^n$ be a 
bounded connected open set with smooth boundary such that
$\bar\dom\subset\uHS^n$. Then there exists an $\R$-linear operator
\[ 
\Gintro:\; C^\infty_c(\dom,\R)\to C^\infty_c(\dom,S^2 (T^*\dom))
\]
with the following properties:

\begin{enumerate}[label=\textnormal{(a\arabic*)}]
\item \label{item:introeq} 
The map 
$\Piscalar:C^\infty_c(\dom,\R)\to C^\infty_c(\dom,\R)$ 
defined by 
%
\[ 
\linscalar \Gintro = \one -\Piscalar
\]
has rank $n+1$,  is a projection ($\Piscalar^2=\Piscalar$),
and is explicitly given as follows:
Let $\kappa_1,\dots,\kappa_{n+1}$ be the static KIDs \eqref{9VII24.51}.
There exist functions $\fintro_1,\dots,\fintro_{n+1}\in C^\infty_c(\Omega,\R)$
such that for all $\rhs\in C^\infty_c(\dom,\R)$,
\begin{equation}
\label{eq:Pidef}
\Piscalar(\rhs) \;=\; 
\sum_{a=1}^{n+1}
\fintro_a \int_{\dom} \kappa_a\, \rhs \,d\mu_{\bb} \,.
\end{equation}
\item\label{item:introest}
For all $s\in\R$ and $1<p<\infty$ there exists a constant $C>0$ such that for all 
$i=1\dots n$  and $\rhs\in C^\infty_c(\Omega,\R)$,
\begin{align*}
\|\Gintro(\rhs)\|_{\norm^{s,p}} &\;\le\; C \|\rhs\|_{\norm^{s-2,p}}\,, \\
\|[\Gintro,\p_{x^i}](\rhs)\|_{\norm^{s,p}} &\;\le\; C \|\rhs\|_{\norm^{s-2,p}}\,,
\end{align*}
where $\|\cdot\|_{\norm^{s,p}}$ are the standard $L^p$-based Sobolev norms.
Analogous estimates hold for standard H\"older norms with exponent $0<\alpha<1$.
\end{enumerate}
\end{proposition}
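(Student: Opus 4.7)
My plan is to follow the Mao--Oh--Tao Bogovski\v{i} strategy and adapt it to hyperbolic space. The structural observation is that $\linscalar$ is underdetermined elliptic of order two: its principal symbol
\[
\sigma_2(\linscalar)(\xi)(\pert) \;=\; -\xi^i\xi^j \pert_{ij} + |\xi|^2 \pert^j{}_j
\]
is surjective onto $\R$ for every $\xi\ne 0$, as seen by testing with $\pert_{ij} = c\,\bb_{ij}$. Consequently $\linscalar\linscalar^*$ is a fourth-order scalar elliptic operator, and the symbol surjectivity provides the gauge freedom needed to impose $\Omega$-support preservation on top of the standard two-derivative gain.

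The proof proceeds in three steps. First, I would construct a pseudodifferential right parametrix $\tilde{G}$ of order $-2$ for $\linscalar$ on a slightly enlarged domain $\Omega'\supset\bar\Omega$ by inverting $\linscalar\linscalar^*$ modulo smoothing and composing with $\linscalar^*$. The two-derivative gain and the order $-2$ commutator estimate with $\partial_{x^i}$ on Sobolev and H\"older scales then follow from the pseudodifferential calculus.

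Second, I would modify $\tilde{G}$ into an operator $\Gfull : C^\infty_c(\Omega,\R)\to C^\infty_c(\Omega, S^2T^*\Omega)$ that is an exact right inverse of $\linscalar$ on the codimension-$(n+1)$ subspace of sources with vanishing KID integrals, inheriting the order $-2$ mapping properties and commutator bounds of $\tilde{G}$. The key tool is a hyperbolic Bogovski\v{i} averaging: the Euclidean construction expresses the right inverse as an integral along rays emanating from a base point, weighted by a bump function supported in $\Omega$; on hyperbolic space I would replace Euclidean rays by hyperbolic geodesics and exploit the isometry group of $(\uHS^n,\bb)$. The symbol surjectivity allows compactly supported kernel corrections to be added without altering the principal symbol, and a finite iteration absorbing the smoothing remainder of step one yields an exact right inverse on the zero-KID-integral subspace.

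Third, I would express the obstruction via the explicit projection $\Piscalar$. Choose $\fintro_1,\dots,\fintro_{n+1}\in C^\infty_c(\Omega,\R)$ such that $\int_\Omega \kappa_a\,\fintro_b\,d\mu_{\bb}=\delta_{ab}$; this is possible since the KIDs $\kappa_1,\dots,\kappa_{n+1}$ from \eqref{9VII24.51} are linearly independent on the connected open set $\Omega$. Define $\Piscalar$ by \eqref{eq:Pidef}; the biorthogonality forces $\Piscalar^2=\Piscalar$ and $\mathrm{rank}\,\Piscalar=n+1$. Setting $\Gintro(\rhs) := \Gfull(\rhs - \Piscalar(\rhs))$ and noting that $\rhs - \Piscalar(\rhs)$ has zero KID integrals by construction, one obtains $\linscalar\Gintro = \one - \Piscalar$ with the claimed properties.

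The main obstacle is step two: simultaneously securing $\Omega$-support, exactness of the right inverse on the zero-KID-integral subspace, and the order $-2$ mapping and commutator bounds. Differentiation does not commute with the $\Omega$-support cutoff baked into the kernel, and the resulting boundary contributions must be absorbed into the two-derivative gain by exploiting the explicit hyperbolic geometry together with the divergence identity \eqref{eq:kappaPpert}; this is the hyperbolic analogue of the delicate Calder\'on--Zygmund-type estimates underlying the Mao--Oh--Tao construction in the flat case.
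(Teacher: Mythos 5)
Your step two contains the real gap. You propose to start from a pseudodifferential parametrix for $\linscalar$ (built from $\linscalar\linscalar^*$) and then upgrade it to an \emph{exact}, support-preserving right inverse on the zero-KID-integral subspace by a ``hyperbolic Bogovski\v{i} averaging'' along geodesics plus ``a finite iteration absorbing the smoothing remainder.'' None of this is substantiated, and it is precisely where the work lies. A smoothing remainder $R$ is not norm-small, so no finite iteration or Neumann series removes it; even if one could invert $\one-R$ on some space, the resulting operator would not obviously preserve support in $\dom$, would not obviously have image exactly annihilating the $(n+1)$-dimensional cokernel spanned by the KIDs, and would not obviously retain the commutator estimates $\|[\Gintro,\p_{x^i}]\rhs\|_{\norm^{s,p}}\le C\|\rhs\|_{\norm^{s-2,p}}$, which in the Bogovski\v{i} setting come from the explicit integral kernel, not from the symbol calculus. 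Likewise, sketching a geodesic analogue of the Bogovski\v{i} kernel for the divergence does not by itself produce a right inverse of the second-order operator $\linscalar$: one still needs the reduction to a (double) divergence structure and the algebraic bookkeeping of the trace part and of the finite-rank defects, and your closing admission that the ``Calder\'on--Zygmund-type estimates'' remain to be done in the hyperbolic geometry confirms that the core analytic step is missing rather than proved. Your step three is fine as a formal scheme (biorthogonal $\fintro_a$, $\Gintro=\Gfull(\one-\Piscalar)$), but it rests entirely on the unproved exactness in step two.

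For comparison, the paper avoids all new hyperbolic analysis: Lemma \ref{lem:Pnice} shows that after the substitution \eqref{eq:changevar} and division by $(x^1)^{n+1}$, the operator $\linscalar$ becomes the flat-looking operator $\noA\mapsto\p_i\p_j\noA^{ij}+\p_1(\tfrac{1}{x^1}\noA^{ii})$ on $\dom\Subset\uHS^n$. One then only needs the \emph{Euclidean} Bogovski\v{i} operator $\Gdiv$ (Lemma \ref{lem:bogdiv}, quoted from the literature): composing it with itself gives a right inverse of the double divergence modulo explicit finite-rank terms (Lemma \ref{lem:divdiv}), an algebraic correction \eqref{eq:choiceX} makes the output trace-free (Lemma \ref{lem:Gtr}), and an explicit choice of the trace part absorbs the extra integrability condition (Lemma \ref{lem:linopinv}). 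Exact identities, support preservation, and the two-derivative and commutator estimates are all inherited directly from $\Gdiv$, with no parametrix, no iteration, and no remainder to absorb. If you want to salvage your route, you would have to either carry out the hyperbolic kernel construction and its estimates in full, or, more economically, adopt a change of variables of the above kind that reduces the problem to the known Euclidean operator.
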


The proof appears at the end of Section \ref{s11VII24.1}.

\begin{acknowledgement}
{\rm
Many useful discussions with Bobby Beig and Erwann Delay are acknowledged.
We are thankful to the Erwinn Schr\"odinger Institute in Vienna for hospitality during an initial stage of this work.
AC was supported by the DAAD 'Forschungsstipendien 
f\"ur Doktorandinnen und Doktoranden' during her research stay in Vienna.
AN is supported by the Swiss National Science Foundation, project number P500PT-214470. PTC's research was supported in part by the NSF under Grant No. DMS-1928930 while the author was in residence at the Simons Laufer Mathematical Sciences Institute (formerly MSRI) in Berkeley during the Fall 2024 semester. 
}
\end{acknowledgement}


\section{Linearized scalar curvature equation}
In this section we derive a simple formula for the 
linear operator $\linscalar$ in \eqref{eq:Pdef}.
We will use the conventions detailed in the following remark.
\begin{remark}\label{rem:conventions}
For a symmetric two-tensor $\pert\in C^\infty(\uHS^n,S^2(T^*\uHS^n))$
we denote by $\pert_{ij}=\pert(\p_{x^i},\p_{x^j})$
its components relative to the coordinates $x$.
Indices are lowered with $\bb_{ij}=\bb(\p_{x^i},\p_{x^j})$
and raised with its inverse $\bb^{ij}$.
All repeated indices are summed over $1\dots n$,
regardless of their position.
We denote by $\delta_{ij}=\delta^{ij}=\delta^i_j$ the Kronecker delta.
\pqed
\end{remark}

Let $\pert\in C^\infty(\uHS^n,S^2(T^*\uHS^n))$.
Using the fact that $\Ricci_{ij}=(1-n)\bb_{ij}$ we have 
\[ 
\linscalar(\pert) = D^i D^j (  \pert_{ij} -   \pert^\ell{}_\ell  \bb_{ij} ) 
- (1-n) \pert^{\ell}{}_{\ell} \,.
\]
Thus if we define $\smash{\tilde{\pert}}=\pert -   \pert^\ell{}_\ell\,  \bb$
then we obtain 
\[ 
\linscalar(\pert) = D^i D^j \smash{\tilde{\pert}}_{ij} - \smash{\tilde{\pert}}^i{}_i\,.
\]
The non-vanishing Christoffel symbols of the hyperbolic metric are
\begin{align*}
\Gamma^1_{11} =  -\tfrac{1 }{x^1}
\,,
\quad
\Gamma^1_{pq} = \tfrac{1}{x^1}\delta_{pq}
  \,,
\quad
\Gamma^p_{1q}= -\tfrac{1}{x^1} \delta^p_q\,,
\end{align*}
where $p,q=2\dots n$. Then by direct calculation,
\begin{align*}
  D^iD^j \smash{\tilde{\pert}}_{ij}
    &=
   (x^1)^{n+1} \partial_i \partial_j
   \left( \tfrac{1}{ (x^1)^{n+1}}\smash{\tilde{\pert}}^{ij}
   \right)
   +(x^1)^n \partial_1 \left(\tfrac{1}{ (x^1)^{n+1}} \delta_{ij}\smash{\tilde{\pert}}^{ij}\right)
    \,,
\end{align*}
where $\p_i = \p_{x^i}$.
Thus, using $\smash{\tilde{\pert}}^i{}_i= \frac{1}{(x^1)^2}\delta_{ij}\smash{\tilde{\pert}}^{ij}$,
\[ 
\linscalar(\pert) = 
(x^1)^{n+1} \partial_i \partial_j
   \left( \tfrac{1}{(x^1)^{n+1}}\smash{\tilde{\pert}}^{ij}
   \right)
   +(x^1)^{n+1} \partial_1 \left(\tfrac{1}{ (x^1)^{n+2}} (\delta_{ij}\smash{\tilde{\pert}}^{ij})\right)\,.
\]
We have therefore proved the following lemma:

\begin{lemma}\label{lem:Pnice}
Let $\pert\in C^\infty(\uHS^n,S^2(T^*\uHS^n))$. Define
\begin{equation}\label{eq:changevar}
\newpert = \tfrac{1}{(x^1)^{n+1}}(\pert - \pert^i{}_{i}\,\bb)\,.
\end{equation}
Then 
$\pert = (x^1)^{n+1} (\newpert + \frac{1}{1-n} \newpert^i{}_{i}\,\bb)$
and 
\begin{equation*}
\linscalar(\pert) 
= 
(x^1)^{n+1} 
\left(\p_{i}\p_{j} \newpert^{ij} 
+  \partial_1 (\tfrac{1}{x^1} \newpert^{ii})
\right)\,.
\end{equation*}
\end{lemma}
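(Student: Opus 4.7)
The plan is a direct coordinate computation organized in three parts: invert the change of variables, reduce $\linscalar(\pert)$ to a cleaner expression via the Einstein condition, and expand the remaining divergence in coordinates. For the inversion, I would take the $\bb$-trace of the defining relation $\newpert = (x^1)^{-(n+1)}(\pert - \pert^i{}_i\,\bb)$; using $\bb^{ij}\bb_{ij}=n$, this gives $\newpert^i{}_i = \frac{1-n}{(x^1)^{n+1}}\pert^i{}_i$, whence $\pert^i{}_i = \frac{(x^1)^{n+1}}{1-n}\newpert^i{}_i$. Substituting this back into the rearranged definition $\pert = (x^1)^{n+1}\newpert + \pert^i{}_i\,\bb$ immediately yields the claimed formula for $\pert$.

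For the reduction of $\linscalar(\pert)$, I would use that on hyperbolic space $\Ricci_{ij} = (1-n)\bb_{ij}$, so $\Ricci^{ij}\pert_{ij} = (1-n)\pert^i{}_i$. Setting $\tilde\pert := \pert - \pert^\ell{}_\ell\,\bb$, parallelism of $\bb$ gives $D^iD^j\tilde\pert_{ij} = D^iD^j\pert_{ij} - D^iD_i\pert^\ell{}_\ell$, and since $\tilde\pert^i{}_i = (1-n)\pert^\ell{}_\ell$, the three summands of $\linscalar(\pert)$ consolidate to $D^iD^j\tilde\pert_{ij} - \tilde\pert^i{}_i$. Parallelism also permits raising indices, so that $D^iD^j\tilde\pert_{ij} = D_iD_j\tilde\pert^{ij}$.

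The main (modest) technical step is the explicit expansion of $D_iD_j\tilde\pert^{ij}$. Iterating the divergence formulas $D_jT^{ij} = |\bb|^{-1/2}\p_j(|\bb|^{1/2}T^{ij}) + \Gamma^i_{jk}T^{jk}$ and $D_iV^i = |\bb|^{-1/2}\p_i(|\bb|^{1/2}V^i)$ with $\sqrt{|\bb|}=(x^1)^{-n}$ and the listed hyperbolic Christoffels, a direct calculation shows that the Christoffel correction $|\bb|^{-1/2}\p_i(|\bb|^{1/2}\Gamma^i_{jk}\tilde\pert^{jk})$ collapses into $(x^1)^n\p_1((x^1)^{-(n+1)}\delta_{ij}\tilde\pert^{ij})$; combining this with $-\tilde\pert^i{}_i = -(x^1)^{-2}\delta_{ij}\tilde\pert^{ij}$ and using the product rule for $\p_1$ packages the pair into a single $(x^1)^{n+1}\p_1((x^1)^{-(n+2)}\delta_{ij}\tilde\pert^{ij})$, reproducing the intermediate identity already displayed in the excerpt. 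Substituting $\tilde\pert^{ij} = (x^1)^{n+1}\newpert^{ij}$, so that $\delta_{ij}\tilde\pert^{ij} = (x^1)^{n+1}\newpert^{ii}$, then converts the two terms to $(x^1)^{n+1}\p_i\p_j\newpert^{ij}$ and $(x^1)^{n+1}\p_1((x^1)^{-1}\newpert^{ii})$, yielding the stated formula. The hyperbolic Christoffels involve only $\delta_{ij}$ and the $x^1$-direction, which is exactly what makes the off-diagonal pieces cancel so cleanly that the surviving correction depends only on the Euclidean trace $\delta_{ij}\tilde\pert^{ij}$.
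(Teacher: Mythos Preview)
Your approach is identical to the paper's: the same trace-reversal $\tilde h = h - h^\ell{}_\ell\,b$, the same reduction $P(h)=D^iD^j\tilde h_{ij}-\tilde h^i{}_i$ via the Einstein condition, and then a direct coordinate expansion using the hyperbolic Christoffels. The inversion in your first paragraph and the reduction in your second are correct as written.

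There is, however, a bookkeeping slip in your third paragraph. The Christoffel correction $|b|^{-1/2}\partial_i\bigl(|b|^{1/2}\Gamma^i_{jk}\tilde h^{jk}\bigr)$ does \emph{not} collapse to the single trace term $(x^1)^n\partial_1\bigl((x^1)^{-(n+1)}\delta_{ij}\tilde h^{ij}\bigr)$. Computing $\Gamma^i_{jk}\tilde h^{jk}$ componentwise gives $\Gamma^1_{jk}\tilde h^{jk}=\tfrac{1}{x^1}(\delta_{jk}\tilde h^{jk}-2\tilde h^{11})$ and $\Gamma^p_{jk}\tilde h^{jk}=-\tfrac{2}{x^1}\tilde h^{1p}$ for $p\ge2$, so the correction carries two extra pieces,
\[
-2(x^1)^n\partial_1\bigl((x^1)^{-(n+1)}\tilde h^{11}\bigr)
\;-\;\tfrac{2}{x^1}\textstyle\sum_{p\ge2}\partial_p\tilde h^{1p}\,.
\]
These do not vanish; they are exactly what is needed to convert your leading term $(x^1)^{n}\partial_i\partial_j\bigl((x^1)^{-n}\tilde h^{ij}\bigr)$ into the paper's $(x^1)^{n+1}\partial_i\partial_j\bigl((x^1)^{-(n+1)}\tilde h^{ij}\bigr)$. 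If you skip them, the substitution $\tilde h^{ij}=(x^1)^{n+1}A^{ij}$ in your leading term produces $(x^1)^n\partial_i\partial_j(x^1 A^{ij})$, not $(x^1)^{n+1}\partial_i\partial_j A^{ij}$. Once you track these terms the computation closes and matches the paper's intermediate identity; this is a minor accounting issue, not a conceptual gap.
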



\section{Construction of the solution operator}

In this section we construct the operator $\Gintro$ of Proposition \ref{prop:main}.
The main tool is a right inverse for the divergence operator
on Euclidean space originally introduced by Bogovski\v i \cite{Bogovskii1,Bogovskii2}.
We recall it in the next lemma.

\begin{lemma}\label{lem:bogdiv}
Let $\Omega\subset\uHS^n$ be as in Proposition \ref{prop:main}.
There exists an $\R$-linear map 
$\Gdiv:C^\infty_c(\Omega,\R)\to C^\infty_c(\Omega,\R^{n})$
with the following properties:
\begin{enumerate}[label=\textnormal{(b\arabic*)}]
\item \label{item:gdivid}
There exists a function $\fdiv\in C^\infty_c(\Omega,\R)$ such that 
$\int_\Omega \fdiv \dx=1$ where $\dx$ is the Lebesgue measure,
and such that for all $\rhs\in C^\infty_c(\Omega,\R)$,
\begin{equation}\label{eq:divG}
\p_{i} \Gdiv^i (\rhs) \;=\; \rhs- \fdiv \int_\Omega \rhs \dx \,.
\end{equation}
\item \label{item:gdivest}
For all $s\in\R$ and $1<p<\infty$ 
there exists a constant $C>0$ such that for all 
$i=1\dots n$ and 
$\rhs\in C^\infty_c(\Omega,\R)$,
\begin{align*}
\|\Gdiv(\rhs)\|_{\norm^{s,p}} &\;\le\; C \|\rhs\|_{\norm^{s-1,p}}\,,\\
\|[\Gdiv,\p_{i}](\rhs)\|_{\norm^{s,p}} &\;\le\; C \|\rhs\|_{\norm^{s-1,p}}\,.
\end{align*}
Analogous estimates hold for H\"older norms with exponent $0<\alpha<1$.

\item \label{item:integralsGdiv}
For all $i=1\dots n$ and $\rhs\in C^\infty_c(\Omega,\R)$,
\begin{align}
\textstyle
\int_{\Omega} \Gdiv^i(\rhs) \dx
&=
\textstyle
-\int_{\Omega} x^i \rhs \dx + m^i \int_{\Omega} \rhs \dx
 \,,
\label{eq:intg1}  \\
\textstyle \int_{\Omega}x^j \Gdiv^j(\rhs) \dx
&=
\textstyle -\tfrac12 \int_{\Omega} |x|^2 \rhs \dx + \tfrac12m^{jj}\int_{\Omega} \rhs \dx
\,,
\label{eq:intg2}
\end{align}
where by definition $m^i=\int_{\Omega} x^i \fdiv \dx$
and $m^{ij}=\int_{\Omega} x^ix^j \fdiv \dx$,
and where we use 
the summation convention in Remark \ref{rem:conventions}.
\end{enumerate}
\end{lemma}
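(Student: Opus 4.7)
The plan is to invoke the classical Bogovski\v i construction \cite{Bogovskii1,Bogovskii2} (see also Galdi's monograph on the Navier--Stokes equations, or the treatment by Costabel--McIntosh), which gives (b1) and (b2) essentially off the shelf, and then to derive (b3) by integration by parts. The domain $\dom$ is bounded with smooth boundary, so it is in particular a Lipschitz (even John) domain and the classical theory applies without modification.

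\textbf{Step 1: construction.} First I would fix $\fdiv\in C^\infty_c(\dom,\R)$ with $\int_\dom \fdiv\,dx=1$. For $\rhs\in C^\infty_c(\dom,\R)$ the modified source $\tilde\rhs:=\rhs-\fdiv\int_\dom \rhs\,dx$ has zero mean, and I would define
\[
\Gdiv^i(\rhs)(x) \;=\; \int_{\dom} N^i(x,y)\,\tilde\rhs(y)\,dy,
\]
where $N^i(x,y)$ is the classical Bogovski\v i kernel built from a fixed radial cutoff supported in $\dom$. Standard arguments show that $\Gdiv(\rhs)\in C^\infty_c(\dom,\R^n)$ (compact support uses $\tilde\rhs$ having mean zero together with the shape of the kernel), and a direct calculation verifies $\p_i\Gdiv^i(\rhs)=\tilde\rhs$, giving (b1).

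\textbf{Step 2: estimates.} For (b2) I would appeal to the fact that the Bogovski\v i operator $\Gdiv$ gains one derivative in every standard scale: it is bounded $W^{s-1,p}\to W^{s,p}$ for $1<p<\infty$, $s\in\R$, and $C^{s-1,\alpha}\to C^{s,\alpha}$ for $0<\alpha<1$. This rests on the observation that the derivatives $\p_{x^j} N^i(x,y)$ give Calder\'on--Zygmund kernels, so each $\p_{x^j}\Gdiv^i$ is a Calder\'on--Zygmund operator plus lower-order terms coming from the cutoff. The commutator estimate is obtained by writing $[\Gdiv,\p_{x^i}]\rhs = \Gdiv(\p_{x^i}\rhs)-\p_{x^i}\Gdiv(\rhs)$ and using the explicit kernel form: the difference of derivatives $\p_{x^i}N^j + \p_{y^i}N^j$ is one order smoother than $N^j$, which upgrades the commutator to gain two derivatives relative to $\rhs$; alternatively one notes that $[\Gdiv,\p_{x^i}]$ equals $\Gdiv$ applied to a smoothing error plus a term of the form $(\p_{x^i}\fdiv)\int\rhs$.

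\textbf{Step 3: moment identities.} For (b3) the computation is purely formal: since $\Gdiv(\rhs)$ is compactly supported in $\dom$, integration by parts yields
\[
0 \;=\; \int_{\dom} \p_j\bigl(x^i\Gdiv^j(\rhs)\bigr)\,dx
\;=\; \int_{\dom}\Gdiv^i(\rhs)\,dx + \int_{\dom} x^i\,\p_j\Gdiv^j(\rhs)\,dx.
\]
Substituting (b1) and using the definition of $m^i$ gives \eqref{eq:intg1}. The identity \eqref{eq:intg2} follows in exactly the same way from $\int_{\dom}\p_j\bigl(\tfrac12|x|^2\Gdiv^j(\rhs)\bigr)\,dx=0$, using $\delta_{ij}x^i=x^j$ and the definition of $m^{jj}$.

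The main obstacle is the commutator estimate in (b2), which is slightly finer than the base mapping property; it is handled either via the kernel calculation sketched above or by citing the commutator version of Bogovski\v i bounds in the literature. Everything else is bookkeeping on top of the classical construction.
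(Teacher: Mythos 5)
Your overall route is the same as the paper's: the paper simply cites the literature (the Bogovski\v i-type operator of Oh--Tataru, resp.\ \cite{Andrea}, together with its pseudodifferential characterization and Taylor's mapping theorems) for \ref{item:gdivid} and \ref{item:gdivest}, and proves \ref{item:integralsGdiv} by exactly the integration by parts you give, so your Step~3 is correct and identical. Two points in your Steps~1--2 need repair, however. First, the explicit single-kernel formula ``built from a fixed radial cutoff supported in $\dom$'' is only valid, as regards both the identity \eqref{eq:divG} and the crucial support property $\Gdiv(\rhs)\in C^\infty_c(\dom,\R^n)$, when $\dom$ is star-shaped with respect to the ball carrying the cutoff; the $\dom$ of Proposition~\ref{prop:main} is a general smooth bounded domain (annulus-type in the intended application), for which the image of the classical kernel operator leaves $\dom$. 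One must either decompose $\dom$ into finitely many star-shaped pieces and split the mean-zero source accordingly (this is the parenthetical remark in the paper's proof), or cite the John/Lipschitz-domain versions outright; your hedge about John domains points in the right direction, but the construction as written does not deliver the lemma for such $\dom$.

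Second, your commutator argument contains an incorrect claim: $(\p_{x^i}+\p_{y^i})N^j(x,y)$ is \emph{not} one order smoother than $N^j$. Writing $N(x,y)=F(x-y,y)$, the sum of derivatives kills only the convolution variable and lands on the explicit $y$-dependence, i.e.\ one obtains precisely the Bogovski\v i kernel with the cutoff $\omega$ replaced by $\p_i\omega$. Hence $[\Gdiv,\p_{x^i}]$ is (minus) another Bogovski\v i-type operator of order $-1$: it gains one derivative, which is exactly the estimate asserted in \ref{item:gdivest}, but it does not gain two, and the stronger claim would be false in general. The correct version of your own kernel computation therefore suffices; just state it that way. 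Relatedly, Calder\'on--Zygmund theory alone gives the bounds most directly for $s=1$ (and positive integer $s$); for the full range $s\in\R$, including negative $s$, and for the H\"older scale, one should invoke the pseudodifferential-operator characterization of the Bogovski\v i operator (Costabel--McIntosh, or the route taken in the paper via \cite{OhTataru,Andrea} and \cite{Taylor96}) rather than attribute these mapping properties to CZ theory.
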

\begin{proof}
Such an operator exists by \cite[Section 4.1]{OhTataru}
(note that $\Omega$ can be covered by finitely 
many open sets that are star-shaped with respect to a ball),
or alternatively by \cite[Theorem 1]{Andrea}.
More precisely, the references imply that there exists an operator
$\Gdivtilde:C^\infty_c(\R^n,\R)\to C^\infty_c(\R^n,\R^{n})$
that satisfies \ref{item:gdivid},
$\Gdivtilde (C^\infty_c(\Omega,\R))\subset C^\infty_c(\Omega,\R^n)$,
and such that if $\chi\in C^\infty_c(\R^{n},\R)$
then $\Gdivtilde\chi$ is a pseudodifferential operator of order $-1$ on $\R^{n}$.
Define $\Gdiv = \Gdivtilde|_{C^\infty_c(\Omega,\R)}$.
Then $\Gdiv$ satisfies \ref{item:gdivid},
and \ref{item:gdivest} using standard mapping properties for
pseudodifferential operators,
see \cite[Chapter 13.6, Proposition 6.5]{Taylor96} for Sobolev norms,
and \cite[Chapter 13.8, Proposition 8.5]{Taylor96} for H\"older norms.

We show \ref{item:integralsGdiv}.
For \eqref{eq:intg1} multiply \eqref{eq:divG} with $x^i$ and integrate, which yields 
\[
\textstyle
\int_{\dom} x^i \p_{j} \Gdiv^j(\rhs) \dx = \int_{\dom} x^i \rhs \dx - 
m^i \int_{\dom} \rhs\dx\,.
\]
Integrating by parts yields \eqref{eq:intg1}.
For \eqref{eq:intg2} multiply \eqref{eq:divG} with $\red{|x|}^2$
and proceed analogously.
\end{proof}

Given such a right inverse $\Gdiv$ for the divergence operator, 
we obtain the solution operator $\Gintro$ for $\linscalar(\pert)=\rhs$ in two steps.
We first use $\Gdiv$ to obtain a right inverse 
for the double divergence on symmetric trace-free two-tensors (Section \ref{sec:tracefree}).
Using this operator one can then solve separately
for the trace of $\pert$ and the trace-free part of $\pert$ (Section \ref{s11VII24.1}).
In particular, given $\Gdiv$ the construction of $\Gintro$ 
requires no further analysis.


\subsection{Trace-free symmetric double divergence}\label{sec:tracefree}

Let $C^\infty_c(\dom,S^2\R^n)$ be the space
of symmetric $n\times n$ matrix valued functions with 
compact support in $\dom$.
In this section we use the right inverse $\Gdiv$ of the divergence
from Lemma \ref{lem:bogdiv} to obtain a solution operator for the equation
\begin{equation}\label{eq:tracelesseq}
\p_{i}\p_{j}\noA^{ij} \;=\; \rhs\,,
\end{equation}
where the source $\rhs$ has compact support in $\dom$
and where we seek $\noA\in C^\infty_c(\dom,S^2\R^n)$
with vanishing matrix trace, $\tr(\noA)=\noA^{ii}=0$.
Define
\begin{equation*}
\chargestr(\rhs)
=
\int_\dom
\Big(
1\,,\,x^1\,,\,x^2\,,\,\dots,x^n\,,\,\red{|x|}^2
\Big) \rhs\,\dx
\;\in\; \R^{n+2}
\,.
\end{equation*}
Integrating by parts one obtains that for all 
$\noA\in C^\infty_c(\dom,S^2\R^n)$ with $\noA^{ii}=0$,
\begin{equation}\label{eq:Qtrint}
\smash{\chargestr}(\p_{i}\p_{j}\noA^{ij}) = 0\,.
\end{equation}
Thus $\smash{\chargestr}(\rhs)=0$ are $n+2$ necessary integrability conditions
for \eqref{eq:tracelesseq}.

\step
As a first step we need the next lemma,
this is similar to \cite[Lemma 2.2]{MaoOhTao}.
\begin{lemma}\label{lem:divdiv}
Let $\dom$, $\Gdiv$, $\fdiv$, $m^i$, $m^{ij}$ be as in Lemma \ref{lem:bogdiv}.
Define the operator
$
\Gdivdiv: C^\infty_c(\dom,\R) \to C^\infty_c(\dom,S^2\R^{n})
$ 
by 
$$\Gdivdiv^{ij}(\rhs) = 
\tfrac12\Gdiv^i(\Gdiv^j(\rhs))+ \tfrac12\Gdiv^j(\Gdiv^i(\rhs))\,.$$
Then:
\begin{enumerate}[label=\textnormal{(c\arabic*)}]
\item \label{item:divdiveq}
For all $\rhs\in C^\infty_c(\dom,\R)$,
\[\textstyle
\p_{i}\p_{j}\Gdivdiv^{ij}(\rhs)
=
\rhs
-\left(\left(\fdiv + m^j \p_{j}\fdiv \right)\int_{\dom} \rhs\dx
- \p_{j}\fdiv \int_{\dom} x^j \rhs\dx\right)\,.
\]

\item \label{item:divdivest}
The estimates in point \ref{item:introest} of Proposition~\ref{prop:main} hold for $\Gdivdiv$.

\item \label{item:divdivtr}
For all $\rhs\in C^\infty_c(\dom,\R)$,
\begin{align*}
\textstyle
\int_{\dom} \Gdivdiv^{ii}(\rhs) \dx
&=
\textstyle
(m^i m^i -\tfrac12  m^{ii}) \int_{\dom} \rhs \dx
-m^i  \int_{\dom} x^i \rhs \dx
+\tfrac12 \int_{\dom} \red{|x|}^2 \rhs \dx \,.
\end{align*}
\end{enumerate}
\end{lemma}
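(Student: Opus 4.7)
The plan is to derive all three items by straightforward computation from Lemma \ref{lem:bogdiv}, since $\Gdivdiv$ is literally built out of two applications of $\Gdiv$. I expect no serious obstacle; the only thing to be careful about is bookkeeping of the ``error'' terms arising each time one applies \eqref{eq:divG} or the integral identities \eqref{eq:intg1}--\eqref{eq:intg2}.

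For \ref{item:divdiveq}, by symmetry of $\Gdivdiv^{ij}$ I first write $\p_i \p_j \Gdivdiv^{ij}(\rhs) = \p_j \p_i \Gdiv^i(\Gdiv^j(\rhs))$. Applying \eqref{eq:divG} to the inner divergence with source $\Gdiv^j(\rhs)$ gives $\p_i \Gdiv^i(\Gdiv^j(\rhs)) = \Gdiv^j(\rhs) - \fdiv \int_\dom \Gdiv^j(\rhs)\dx$. I then differentiate in $x^j$ and sum: the derivative of the first term yields $\p_j \Gdiv^j(\rhs) = \rhs - \fdiv \int_\dom \rhs\dx$ by \eqref{eq:divG}, while the derivative of the second produces $-\p_j\fdiv \cdot \int_\dom \Gdiv^j(\rhs)\dx$, and I evaluate the remaining integral using \eqref{eq:intg1}. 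Collecting gives the claimed identity.

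For \ref{item:divdivest}, I apply the Sobolev/H\"older estimate in \ref{item:gdivest} twice. For the norm estimate, $\|\Gdivdiv(\rhs)\|_{\norm^{s,p}} \lesssim \|\Gdiv(\rhs)\|_{\norm^{s-1,p}} \lesssim \|\rhs\|_{\norm^{s-2,p}}$, which gives the analogue of the first inequality in \ref{item:introest}. For the commutator with $\p_k$, I expand
\[
[\Gdiv^i \Gdiv^j,\p_k] \;=\; \Gdiv^i\,[\Gdiv^j,\p_k] + [\Gdiv^i,\p_k]\,\Gdiv^j\,,
\]
and apply to each summand the continuity of $\Gdiv$ (losing one derivative) and the commutator bound in \ref{item:gdivest} (also losing one derivative), for a total loss of two derivatives, as required. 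The H\"older case is identical.

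For \ref{item:divdivtr}, since $\Gdivdiv^{ii}(\rhs) = \Gdiv^i(\Gdiv^i(\rhs))$ (summed), I apply \eqref{eq:intg1} to the outer integral with input $\Gdiv^i(\rhs)$, obtaining
\[
\int_\dom \Gdivdiv^{ii}(\rhs)\dx \;=\; -\int_\dom x^i \Gdiv^i(\rhs)\dx + m^i \int_\dom \Gdiv^i(\rhs)\dx\,.
\]
The first integral on the right is evaluated by \eqref{eq:intg2} and the second by \eqref{eq:intg1}; collecting the four resulting terms and regrouping by integrals against $\rhs$, $x^i\rhs$ and $|x|^2\rhs$ yields exactly the stated formula. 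The only step that might tempt a slip is the sign conventions in \eqref{eq:intg1}--\eqref{eq:intg2}, but tracking them carefully reproduces the coefficient $(m^im^i-\tfrac12 m^{ii})$ in front of $\int \rhs\dx$.
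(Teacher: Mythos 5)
Your proposal is correct and follows essentially the same route as the paper: apply \eqref{eq:divG} twice and evaluate the leftover integral via \eqref{eq:intg1} for \ref{item:divdiveq}, use the Leibniz-type commutator identity $[\Gdiv^i\Gdiv^j,\p_k]=\Gdiv^i[\Gdiv^j,\p_k]+[\Gdiv^i,\p_k]\Gdiv^j$ together with \ref{item:gdivest} for \ref{item:divdivest}, and combine \eqref{eq:intg1}--\eqref{eq:intg2} for \ref{item:divdivtr}. You merely spell out the computations that the paper leaves as ``follows directly from \ref{item:integralsGdiv}'', and your bookkeeping of the terms reproduces the stated coefficients correctly.
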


\begin{proof}
\ref{item:divdiveq}:
Using point~\ref{item:gdivid} of Lemma~\ref{lem:bogdiv} we have
\begin{align*}
\p_{i}\p_{j}\Gdivdiv^{ij}(\rhs)
=
\p_{j}\p_{i}\Gdiv^i(\Gdiv^j(\rhs))
&=
\textstyle
\p_{j}\Gdiv^j(\rhs) - \p_{j}\fdiv \int_{\dom} \Gdiv^j(\rhs) \dx \\
&=
\textstyle
\rhs- \fdiv\int_{\dom} \rhs \dx  - \p_{j}\fdiv \int_{\dom} \Gdiv^j(\rhs) \dx
 \,.
\end{align*}
Together with \ref{item:integralsGdiv} one obtains \ref{item:divdiveq}.

\ref{item:divdivest}: This follows directly from \ref{item:gdivest},
for the commutator estimate also use
$[\Gdiv^k\Gdiv^\ell,\p_{i}]=\Gdiv^k[\Gdiv^\ell,\p_{i}]+[\Gdiv^k,\p_{i}]\Gdiv^\ell$
for all $k,\ell,i=1\dots n$.

\ref{item:divdivtr}: This follows directly from \ref{item:integralsGdiv}.
\end{proof}

Suppose that $\rhs\in C^\infty_c(\dom,\R)$ satisfies the
integrability conditions $\smash{\chargestr}(\rhs)=0$ from \eqref{eq:Qtrint}.
Then by \ref{item:divdiveq}, respectively \ref{item:divdivtr},
\begin{align*}
\p_{i}\p_{j}\Gdivdiv^{ij}(\rhs)=\rhs\,,
\qquad
\textstyle\int_\dom \Gdivdiv^{ii}(\rhs) \dx=0\,.
\end{align*}
The second identity and point \ref{item:gdivid} of Lemma~\ref{lem:bogdiv} imply
$\Gdivdiv^{ii}(\rhs)=\p_{j}\Gdiv^j(\Gdivdiv^{ii}(\rhs))$.
This motivates making the following ansatz
for a trace-free solution of \eqref{eq:tracelesseq}:
\begin{equation}\label{eq:ansatz}
\Gdivdiv^{ij}(\rhs) + (X^\ell{}_{k})^{ij} \p_{\ell} \Gdiv^{k}(T)
\quad\text{with}\quad T=\Gdivdiv^{ii}(\rhs)\,,
\end{equation}
where $(X^\ell{}_{k})^{ij}\in \R$
is symmetric in $ij$, $(X^\ell{}_{k})^{ij}=(X^\ell{}_{k})^{ji}$.
We require that:
\begin{itemize}
\item
The total symmetrization of $(X^\ell{}_{k})^{ij}$
in $\ell,i,j$ vanishes, for all $k$.
\\
Then
$\p_{i}\p_{j} \big((X^\ell{}_{k})^{ij} \p_{\ell} \Gdiv^{k}(T) \big)=0$,
hence \eqref{eq:ansatz} still solves \eqref{eq:tracelesseq}.
\item
$(X^\ell{}_k)^{ii} = -\delta^\ell_k$ for all $k,\ell$.\\
Then
$(X^\ell{}_{k})^{ii} \p_{\ell} \Gdiv^{k}(T) = - \p_{\ell} \Gdiv^{\ell}(T) = -T$,
hence \eqref{eq:ansatz} is trace-free.
\end{itemize}
The following choice for $(X^\ell{}_k)^{ij}$
satisfies these properties:
%
\begin{equation}\label{eq:choiceX}
(X^\ell{}_k)^{ij} =
\tfrac{1}{n-1}
\left(
  \tfrac 12 (\delta^{\ell i} \delta^j_k +
  			\delta^{\ell j} \delta^i_k)
  - \delta ^{ij}\delta^\ell_k
  \right)
  \,.
\end{equation}
%


\begin{lemma}\label{lem:Gtr}
Let $\dom$, $\fdiv$, $\Gdiv$, $\Gdivdiv$ be as in Lemma \ref{lem:bogdiv}
and Lemma \ref{lem:divdiv},
and let $(X^\ell{}_{k})^{ij}$ be as in \eqref{eq:choiceX}.
Define 
$\Gtr: C^\infty_c(\dom,\R) \to C^\infty_c(\dom,S^2\R^{n})$ by
\begin{align}\label{eq:Gtrformula}
\Gtr^{ij}(\rhs)
&\;=\;
\Gdivdiv^{ij}(\rhs) + (X^\ell{}_{k})^{ij} \p_{\ell} \Gdiv^{k}\big(\Gdivdiv^{pp}(\rhs)\big)\,.
\end{align}
Then:
\begin{enumerate}[label=\textnormal{(d\arabic*)}]
\item \label{item:Gtreq}
For all $\rhs\in C^\infty_c(\dom,\R)$ one has
$\p_{i}\p_{j}\Gtr^{ij}(\rhs)=\p_{i}\p_{j}\Gdivdiv^{ij}(\rhs)$.
\item \label{item:Gtrest}
The estimates of point \ref{item:introest} of Proposition~\ref{prop:main} hold for $\Gtr$.

\item \label{item:Gtretr}
For all $\rhs\in C^\infty_c(\dom,\R)$ one has
$\Gtr^{ii}(\rhs)=\fdiv \int_{\dom}\Gdivdiv^{ii}(\rhs)\dx$.
\end{enumerate}
\end{lemma}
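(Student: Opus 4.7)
The plan is to verify \ref{item:Gtreq}, \ref{item:Gtrest}, \ref{item:Gtretr} in sequence; each reduces to a short manipulation using the properties of $\Gdiv$ and $\Gdivdiv$ already established in Lemmas~\ref{lem:bogdiv} and~\ref{lem:divdiv}, so all the analytic content is in place and only algebra and book-keeping remain.

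For \ref{item:Gtreq}, I would use that $(X^\ell{}_k)^{ij}$ has constant entries to pull the derivatives inside, writing
\[
\p_i\p_j\Gtr^{ij}(\rhs) \;=\; \p_i\p_j\Gdivdiv^{ij}(\rhs) + (X^\ell{}_k)^{ij}\,\p_i\p_j\p_\ell\, \Gdiv^k\bigl(\Gdivdiv^{pp}(\rhs)\bigr).
\]
Since $\p_i\p_j\p_\ell$ is totally symmetric in $(i,j,\ell)$, the second term is unchanged if $(X^\ell{}_k)^{ij}$ is replaced by its symmetrization over the indices $(\ell,i,j)$. Because $(X^\ell{}_k)^{ij}$ is already symmetric in $(i,j)$, this symmetrization equals $\tfrac13\bigl((X^\ell{}_k)^{ij}+(X^i{}_k)^{\ell j}+(X^j{}_k)^{\ell i}\bigr)$. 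With the explicit choice \eqref{eq:choiceX}, a one-line computation shows that the coefficients of the three tensor structures $\delta^{\ell i}\delta^j_k$, $\delta^{\ell j}\delta^i_k$, and $\delta^{ij}\delta^\ell_k$ in this sum are each $\tfrac12+\tfrac12-1=0$. This is precisely the property for which \eqref{eq:choiceX} was designed.

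For \ref{item:Gtrest}, I would bound the two terms in \eqref{eq:Gtrformula} separately. The $\Gdivdiv$ term is directly covered by \ref{item:divdivest}. For the second term $(X^\ell{}_k)^{ij}\p_\ell \Gdiv^k\circ\Gdivdiv^{pp}$, the constants are harmless; the derivative $\p_\ell$ costs one order, $\Gdiv^k$ gives a one-order gain from \ref{item:gdivest}, and $\Gdivdiv^{pp}$ gives a two-order gain from \ref{item:divdivest}, for the required net two-derivative gain on both Sobolev and H\"older scales. For the commutator with $\p_{x^i}$, I would use that $\p_\ell$ and the constant coefficients commute with $\p_{x^i}$ to distribute the commutator as
\[
[\p_\ell \Gdiv^k \Gdivdiv^{pp},\p_{x^i}] \;=\; \p_\ell [\Gdiv^k,\p_{x^i}]\Gdivdiv^{pp} + \p_\ell \Gdiv^k [\Gdivdiv^{pp},\p_{x^i}],
\]
and to apply the commutator estimates of \ref{item:gdivest} and \ref{item:divdivest} to the two pieces.

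For \ref{item:Gtretr}, I would take the matrix trace $i=j$ in \eqref{eq:Gtrformula}. A direct computation from \eqref{eq:choiceX} yields $(X^\ell{}_k)^{ii}=\tfrac{1}{n-1}(\delta^\ell_k-n\delta^\ell_k)=-\delta^\ell_k$, so
\[
\Gtr^{ii}(\rhs) \;=\; \Gdivdiv^{ii}(\rhs) - \p_\ell \Gdiv^\ell\bigl(\Gdivdiv^{pp}(\rhs)\bigr).
\]
Applying \ref{item:gdivid} to the source $\Gdivdiv^{pp}(\rhs)$ rewrites the last term as $\Gdivdiv^{pp}(\rhs)-\fdiv\int_\dom \Gdivdiv^{pp}(\rhs)\,\dx$; the two trace terms cancel and \ref{item:Gtretr} follows. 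There is essentially no obstacle here, since the two algebraic identities needed, namely the vanishing of the symmetrization in \ref{item:Gtreq} and the trace identity $(X^\ell{}_k)^{ii}=-\delta^\ell_k$ used in \ref{item:Gtretr}, are exactly the two design properties motivating the choice \eqref{eq:choiceX} in the paragraph preceding the lemma.
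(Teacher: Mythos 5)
Your proposal is correct and follows essentially the same route as the paper: \ref{item:Gtreq} via the vanishing total symmetrization of $(X^\ell{}_k)^{ij}$ against $\p_i\p_j\p_\ell$, \ref{item:Gtrest} by combining the estimates \ref{item:gdivest} and \ref{item:divdivest} (with the Leibniz-type splitting of the commutator), and \ref{item:Gtretr} from $(X^\ell{}_k)^{ii}=-\delta^\ell_k$ together with \ref{item:gdivid} applied to $\Gdivdiv^{pp}(\rhs)$. You merely spell out the algebra that the paper leaves implicit; no gaps.
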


In particular, if $\smash{\chargestr}(\rhs)=0$ then
$\noA=\Gtr(\rhs)$ is trace-free and solves \eqref{eq:tracelesseq}.

\begin{proof}
Denote the second term on the right
of \eqref{eq:Gtrformula} by $X^{ij}(\rhs)$.

\ref{item:Gtreq}:
By $\p_{i}\p_{j}X^{ij}(\rhs)=0$, using our choice of $(X^\ell{}_k)^{ij}$.

\ref{item:Gtrest}: This follows from points \ref{item:gdivest} of Lemma~\ref{lem:bogdiv} 
and \ref{item:divdivest} of Lemma~\ref{lem:divdiv}.

\ref{item:Gtretr}: We have 
\begin{align*}
X^{ii}(\rhs)
&=
-\p_{\ell} \Gdiv^{\ell}(\Gdivdiv^{ii}(\rhs))
=
-\Gdivdiv^{ii}(\rhs)
+\textstyle \fdiv \int_{\dom}\Gdivdiv^{ii}(\rhs)\dx\,,
\end{align*}
where for the last equality we use \ref{item:gdivid}.
This implies \ref{item:Gtretr}.
\end{proof}


\subsection{Linearized constant scalar curvature equation}\label{s11VII24.1}

\newcommand{\vecsp}{\Psi}
\newcommand{\functr}{\tc{blue}{\psi}}

Define the linear operator $\Linop:C^\infty(\Omega,S^2\R^n)\to C^\infty(\Omega,\R)$,
\[
\Linop(\noA) =
\partial_{i} \partial_{j} \noA^{ij}
   + \partial_{1} \big(\tfrac{1}{x^1} \noA^{ii}\big)\,.
\]
By Lemma \ref{lem:Pnice} this agrees with the 
linearized scalar curvature operator $\linscalar$
up to the change of variables \eqref{eq:changevar} and the 
overall factor $(x^1)^{n+1}$.
Here we use the operator $\Gtr$ from Lemma \ref{lem:Gtr}
to obtain a solution operator for
\begin{align}\label{eq:Leq}
\Linop(\noA) = \rhs\,,
\end{align}
where $\rhs$ has compact support in $\dom$
and we seek $\noA\in C^\infty_c(\dom,S^2\R^n)$.
Define 
\begin{equation*}
\chargesL(\rhs)
=
\int_\Omega 
\Big(
1\,,\,x^2\,,\,\dots,x^n\,,\,\red{|x|}^2
\Big)
\rhs\, \dx
\;\in\; \R^{n+1}\,.
\end{equation*}
By \eqref{eq:kappaP}, for all $\noA$ with compact support we have 
\begin{align}\label{eq:cLnoA}
\chargesL(\Linop(\noA)) = 0
\end{align}
and thus $\chargesL(\rhs)=0$ are
$n+1$ necessary integrability conditions for \eqref{eq:Leq}.

We indicate how one can solve \eqref{eq:Leq} when $\chargesL(\rhs)=0$. 
Decompose
$$
\noA = \oA + \tfrac{1}{n} \Atr \one
$$
where $\oA$ is trace-free symmetric and $\Atr$ is a function,
and where $\one$ is the $n\times n$ identity matrix.
Then \eqref{eq:Leq} reads
\begin{equation}
\label{eq:L1def}
\p_{i}\p_{j} \oA^{ij} = \rhs - \Linop\big(\tfrac{1}{n}\Atr\one \big)\,,
\end{equation}
with $\Linop(\tfrac{1}{n}\Atr\one ) 
=  \tfrac{1}{n} \p_{i} \p_{i} \Atr + \p_{1}(\tfrac{1}{x^1}\Atr)$.
Note that for all $\Atr$ with compact support,
\begin{align}\label{eq:AddCharge}
\textstyle\int_\Omega x^1 \Linop\big(\tfrac{1}{n}\Atr\one \big)\dx &= - \textstyle\int_{\Omega} \frac{1}{x^1}\Atr\dx\,.
\end{align}
Thus by choosing $\Atr$ appropriately one can achieve
that $f'=f-\Linop(\tfrac{1}{n}\Atr\one )$ satisfies 
both $\chargesL(f')=0$ and $\int_{\Omega} x^1 f'\dx=0$,
which means that $f'$ satisfies the $n+2$ integrability conditions
$\chargestr(f')=0$ for the trace-free symmetric double divergence
in \eqref{eq:Qtrint}. 
Thus $\oA=\Gtr(\rhs')$
is trace-free and solves \eqref{eq:L1def}. 


This is implemented in the following proposition.

\begin{lemma}\label{lem:linopinv}
Let $\Omega$, $\fdiv$, $\Gtr$ be as in Lemma \ref{lem:Gtr}.
Set $\functr=-x^1\fdiv$.
Define $\Gfull:C^\infty_c(\dom,\R)\to C^\infty_c(\dom,S^2\R^{n})$
by
\[
\Gfull(\rhs) =
\textstyle
\Gtr\left( \rhs -  \Linop\left(\tfrac{1}{n}\functr(\int_{\dom} x^1 \rhs\dx) \one\right)  \right)
+ \frac{1}{n}\functr  (\int_{\dom}x^1\rhs\dx) \one\,.
\]
Then:
\begin{enumerate}[label=\textnormal{(e\arabic*)}]
\item \label{item:Gfulleq}
There exist $\fdiv_1,\dots,\fdiv_{n+1}\in C^\infty_c(\Omega,\R)$
such that for all $\rhs\in C^\infty_c(\Omega,\R)$,
\[
\Linop (\Gfull (\rhs)) = \rhs- \sum_{a=1}^{n+1} \fdiv_a\, \chargesL^a (\rhs) \,.
\]
\item \label{item:Gfullest}
The estimates of point  \ref{item:introest} of Proposition~\ref{prop:main} hold for $\Gfull$.
\end{enumerate}
\end{lemma}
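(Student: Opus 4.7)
The strategy is to view $\Gfull$ as a two-stage solver: first correct the source so that the trace-free double-divergence integrability conditions $\chargestr=0$ from \eqref{eq:Qtrint} are satisfied, then apply $\Gtr$ from Lemma~\ref{lem:Gtr}. The correction is implemented by subtracting $\Linop(\noA_2)$, where $\noA_2=\tfrac{1}{n}\functr\,c(\rhs)\,\one$ and $c(\rhs)=\int_\dom x^1\rhs\,dx$; this is a pure-trace ansatz designed to adjust exactly the $\int x^1 \cdot \dx$ moment without affecting any of the others.

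First I would verify the moment bookkeeping. Since $\chargesL\circ\Linop=0$ by \eqref{eq:cLnoA}, one has $\chargesL(\rhs-\Linop(\noA_2))=\chargesL(\rhs)$, i.e.\ none of the $n+1$ components of $\chargesL$ are changed by the correction. For the remaining $\chargestr$ component, namely $\int x^1\,\cdot\, \dx$, the identity \eqref{eq:AddCharge} together with $\int_\dom \fdiv\,\dx=1$ and the definition $\functr=-x^1\fdiv$ gives
\[
\textstyle\int_\dom x^1\,\Linop(\noA_2)\,\dx
=
-\int_\dom \tfrac{1}{x^1}\functr\,c(\rhs)\,\dx
=
c(\rhs)\int_\dom \fdiv\,\dx
=
c(\rhs)\,,
\]
so $\int_\dom x^1(\rhs-\Linop(\noA_2))\,\dx=0$. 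In summary, writing $\rhs'=\rhs-\Linop(\noA_2)$, the vector $\chargestr(\rhs')\in\R^{n+2}$ has its $x^1$-component equal to zero and its other $n+1$ components equal to the corresponding components of $\chargesL(\rhs)$.

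Next I would compute $\Linop(\Gfull(\rhs))=\Linop(\Gtr(\rhs'))+\Linop(\noA_2)$. Unfolding $\Linop(\Gtr(\rhs'))=\p_i\p_j\Gtr^{ij}(\rhs')+\p_1(\tfrac{1}{x^1}\Gtr^{ii}(\rhs'))$ and using items \ref{item:Gtreq} and \ref{item:Gtretr} of Lemma~\ref{lem:Gtr} together with items \ref{item:divdiveq} and \ref{item:divdivtr} of Lemma~\ref{lem:divdiv}, every error term is a fixed smooth function of $x$ (built from $\fdiv$, $\p_j\fdiv$, $\p_1(\fdiv/x^1)$, all supported in $\dom$) multiplied by one of the $n+2$ linear functionals $\int\rhs'\dx$, $\int x^j\rhs'\dx$, $\int|x|^2\rhs'\dx$, that is, by one of the components of $\chargestr(\rhs')$. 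Adding back $\Linop(\noA_2)$ cancels $\rhs'$ against $\rhs$, and substituting $\chargestr(\rhs')=(\chargesL_1(\rhs),0,\chargesL_2(\rhs),\dots,\chargesL_{n+1}(\rhs))$ from the previous paragraph leaves an expression of the form $\rhs-\sum_{a=1}^{n+1}\fdiv_a\,\chargesL^a(\rhs)$ for explicit $\fdiv_a\in C^\infty_c(\dom,\R)$. This establishes \ref{item:Gfulleq}.

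For \ref{item:Gfullest}, the main work has already been done in item \ref{item:Gtrest} of Lemma~\ref{lem:Gtr}. Each of the scalar functionals $c(\rhs)=\int x^1\rhs\,\dx$ and $\chargesL^a(\rhs)$ is a pairing of $\rhs$ against a smooth function on $\bar\dom$, so for any $s\in\R$ and $1<p<\infty$ one has $|c(\rhs)|+|\chargesL(\rhs)|\le C\|\rhs\|_{\norm^{s-2,p}}$ by duality against a cutoff smooth extension of $x^1$ or $|x|^2$. Since $\functr,\fdiv\in C^\infty_c(\dom)$ and $1/x^1$ is smooth on $\bar\dom$, the operator $\rhs\mapsto \noA_2$ and then $\rhs\mapsto\Linop(\noA_2)$ are bounded $\norm^{s-2,p}\to\norm^{s,p}$ and $\norm^{s-2,p}\to\norm^{s-2,p}$ respectively, with the analogous statements for H\"older norms. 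Composing with the estimate for $\Gtr$ yields the $\norm^{s,p}$ bound on $\Gfull(\rhs)$. The commutator bound $\|[\Gfull,\p_{x^i}](\rhs)\|_{\norm^{s,p}}\le C\|\rhs\|_{\norm^{s-2,p}}$ follows by expanding
\[
[\Gfull,\p_i]=[\Gtr,\p_i]\circ(\Id-\Linop\noA_2)+\Gtr\circ[\Linop\noA_2,\p_i]+[\noA_2,\p_i],
\]
noting that $[\noA_2,\p_i]$ and $[\Linop\noA_2,\p_i]$ are zero-order operators acting on $\rhs$ through the scalar $c(\rhs)$ and hence are manifestly bounded, while the $[\Gtr,\p_i]$ piece is controlled by item \ref{item:Gtrest} of Lemma~\ref{lem:Gtr}. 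The only step requiring any care is the tracking of integrability conditions in the first two paragraphs; once those bookkeeping identities are in place, the rest of the proof is routine.
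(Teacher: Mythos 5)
Your proposal is correct and follows essentially the same route as the paper: correct the source to $\rhs'=\rhs-\Linop(\tfrac1n\functr c(\rhs)\one)$, check via \eqref{eq:cLnoA}, \eqref{eq:AddCharge} and $\int_\dom\fdiv\,\dx=1$ that $\chargesL(\rhs')=\chargesL(\rhs)$ and $\int_\dom x^1\rhs'\,\dx=0$, then expand $\Linop\Gfull$ with \ref{item:Gtreq}, \ref{item:Gtretr}, \ref{item:divdiveq}, \ref{item:divdivtr} so the error is a combination of compactly supported functions times $\chargesL^a(\rhs)$. The only cosmetic difference is in \ref{item:Gfullest}, where the paper simply rewrites $\Gfull=\Gtr+\bigl(\tfrac1n\functr\one-\Gtr(\Linop(\tfrac1n\functr\one))\bigr)\int_\dom x^1\rhs\,\dx$ as a rank-one perturbation of $\Gtr$, which is a slightly tidier version of your commutator bookkeeping.
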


\begin{proof}
\ref{item:Gfulleq}:
Abbreviate $\Atr = \functr \int_{\dom} x^1 \rhs\dx$
and $\rhs'=\rhs - \Linop(\frac{1}{n}\Atr \one)$.
Then 
\begin{align}
\chargesL(\rhs') 
&=\chargesL(\rhs) - \chargesL\big(\Linop\big(\tfrac{1}{n}\Atr\one \big)\big)
= \chargesL(\rhs)\,, 
\label{eq:chargesrho'1}\\
\textstyle\int_{\Omega} x^1 \rhs'\dx 
&= 
\textstyle
\int_{\Omega} x^1 \rhs\dx 
- 
\int_{\Omega} x^1 \Linop\big(\tfrac{1}{n}\Atr\one \big) \dx \nonumber \\
&= 
\textstyle
\big(1+\int_{\Omega} \frac{1}{x^1}\functr\dx\big)\int_{\Omega} x^1 \rhs\dx 
=0 \,.\label{eq:chargesrho'2}
\end{align}
For \eqref{eq:chargesrho'1} we use \eqref{eq:cLnoA},
and for \eqref{eq:chargesrho'2} we use \eqref{eq:AddCharge}
and $\int_{\Omega} \frac{1}{x^1}\functr\dx
=-\int_{\Omega}\fdiv\dx =-1$.
By definition of \smash{$\Linop$} and \smash{$\Gfull$} we have
\begin{align*}
\Linop(\Gfull(\rhs))
&=
\p_i\p_j\Gtr^{ij}(\rhs') + \p_1\big(\tfrac{1}{x^1}\Gtr^{ii}(\rhs')\big) 
+ \Linop\big(\tfrac{1}{n}\Atr \one\big)\\
&=
\p_i\p_j\Gdivdiv^{ij}(\rhs') 
+ \p_1(\tfrac{1}{x^1} \fdiv)\tint_{\dom}\Gdivdiv^{ii}(\rhs')\dx
+ \Linop\big(\tfrac{1}{n}\Atr \one\big)
\end{align*}
where for the second equality we use \ref{item:Gtreq} and \ref{item:Gtretr}.
We now rewrite the first term using \ref{item:divdiveq},
and the second term using \ref{item:divdivtr},
then we simplify using \eqref{eq:chargesrho'1} and \eqref{eq:chargesrho'2}.
This yields \ref{item:Gfulleq} with 
\begin{align*}
\fdiv_a = 
\begin{cases}
 \fdiv + m^i \p_i\fdiv - (m^im^i -\frac12 m^{ii})\p_1\big(\frac{1}{x^1}\fdiv\big) 
 & \text{if $a=1$}\,,\\
 -\p_a\fdiv+ m^a\p_1\big(\frac{1}{x^1}\fdiv\big)
 & \text{if $a=2\dots n$}\,,\\
 -\frac12 \p_1\big(\frac{1}{x^1}\fdiv\big)
 & \text{if $a=n+1$}\,.
\end{cases}
\end{align*}

\ref{item:Gfullest}:
By linearity of $\Gtr$ we can write
\[
\Gfull(\rhs) =
\Gtr(\rhs)
+\Big(\tfrac{1}{n}  \functr \one
- \Gtr\big(\Linop\big(\tfrac{1}{n}\functr \one\big)\big)
\Big) \textstyle\int_{\dom} x^1 \rhs\dx \,.
\]
The first term satisfies the estimates by \ref{item:Gtrest},
for the second they are clear.
\end{proof}

We are now ready to pass to the 

\begin{proof}[of Proposition \ref{prop:main}]
Define $\Gintro$ by 
$$
\Gintro^{ij}(\rhs)
=
(x^1)^{n+1} \left( 
\Gfull^{ij}
+
\tfrac{1}{1-n} \bb^{ij} \bb_{k\ell}\Gfull^{k\ell}
\right)\big(\tfrac{1}{(x^1)^{n+1}}\rhs\big) \,.
$$
We check \ref{item:introeq}: 
 We have \eqref{eq:Pidef} 
by Lemma \ref{lem:Pnice} and \ref{item:Gfulleq};
the map $\Piscalar$ has rank $n+1$ since 
by \eqref{eq:Pidef} the rank is $\le n+1$
and by \eqref{eq:kappaP} the rank is $\ge n+1$;
we have $\Piscalar^2=\Piscalar$ since applying 
$\Piscalar$ to \eqref{eq:Pidef} and using \eqref{eq:kappaP}
yields $0=\Piscalar-\Piscalar^2$. 
We check \ref{item:introest}: This holds by \ref{item:Gfullest},
note that multiplying and dividing by $x^1$ does not
affect the estimates since $\bar{\Omega}\subset\uHS^n$.
\end{proof}


\section{Application to Corvino-Schoen-type gluing} \label{s13VIII24}

\newcommand{\out}{\textnormal{out}}
\newcommand{\IN}{\textnormal{in}}
\newcommand{\gzero}{\overset{\circ}{g}}
\newcommand{\gin}{g_{\IN}}
\newcommand{\gout}{g_{\out}}
\newcommand{\gsol}{g}

Proposition \ref{prop:main} can be used to prove the following nonlinear result.
We give an informal statement, since the details are routine.
\begin{proposition}[Informal]\label{prop:nonl}
Let $n\ge2$ and $s>n/2$.
Let $\dom\subset \uHS^n$ be an open annulus-type region
with $\bar\dom\subset\uHS^n$.
Let $\gin$ and $\gout$ be two smooth (or $H^s=W^{s,2}$) metrics on $\bar\dom$
that have constant scalar curvature $-n(n-1)$ 
and that are sufficiently close in $H^s$
to the hyperbolic metric $\bb$. 
Then there exists a smooth (or $H^s$) metric $\gsol$ on $\bar\dom$ such that:
\begin{itemize}
\item 
$\RicciScalar(\gsol)=-n(n-1)\mod\ran(\Piscalar)$,
where $\Piscalar$ is defined in Proposition~\ref{prop:main}. 
\item 
$\gsol=\gin$ in a neighborhood of the interior boundary of $\bar\dom$.
\item 
$\gsol=\gout$ in a neighborhood of the exterior boundary of $\bar\dom$.
\end{itemize}
\end{proposition}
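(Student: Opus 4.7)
The plan is to combine a cutoff interpolation with a contraction-mapping argument driven by the solution operator $\Gintro$ of Proposition~\ref{prop:main}. First I would fix a smooth cutoff $\chi\colon\bar\Omega\to[0,1]$ equal to $1$ near the inner boundary and $0$ near the outer boundary, and set
\[
\tilde g \;=\; \bb + \chi(\gin-\bb) + (1-\chi)(\gout-\bb)\,.
\]
Where $\chi$ is locally constant, $\tilde g$ coincides with $\gin$ or $\gout$, so there $\RicciScalar(\tilde g)=-n(n-1)$ already. Consequently the error $e:=\RicciScalar(\tilde g)+n(n-1)$ is supported in the transition annulus $U\Subset\Omega$ where $\nabla\chi\ne 0$, and a Taylor expansion of $\RicciScalar$ at $\bb$, combined with $\RicciScalar(\gin)=\RicciScalar(\gout)=-n(n-1)$, gives $\|e\|_{H^{s-2}}\lesssim\|\gin-\bb\|_{H^s}+\|\gout-\bb\|_{H^s}$, which is small by hypothesis.

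Next, expand $\RicciScalar(\tilde g+h)+n(n-1) = e + P_{\tilde g}(h) + N_{\tilde g}(h)$, where $P_{\tilde g}$ is the linearization of $\RicciScalar$ at $\tilde g$ and $N_{\tilde g}$ is the quadratic-and-higher remainder. Splitting $P_{\tilde g}=\linscalar+(P_{\tilde g}-\linscalar)$, the condition $\RicciScalar(\tilde g+h)+n(n-1)\in\ran(\Piscalar)$ is implied by
\[
\linscalar(h) \;=\; -e - (P_{\tilde g}-\linscalar)(h) - N_{\tilde g}(h) \pmod{\ran\Piscalar}\,.
\]
Choosing $U'$ with $U\Subset U'\Subset\Omega$ and applying Proposition~\ref{prop:main} with domain $U'$, this becomes the fixed-point problem
\[
h \;=\; \mathcal T(h) \;:=\; \Gintro\bigl(-e - (P_{\tilde g}-\linscalar)(h) - N_{\tilde g}(h)\bigr)\,.
\]
Since $\Gintro$ maps into $C^\infty_c(U',S^2(T^*U'))$, any fixed point automatically has support inside $\overline{U'}\subset\Omega$ and therefore vanishes in a neighborhood of each component of $\partial\bar\Omega$, so $\gsol:=\tilde g+h$ equals $\gin$ near the inner boundary and $\gout$ near the outer one, as required.

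The core of the argument is to show that $\mathcal T$ is a contraction on a small ball $\{\|h\|_{H^s(U')}\le r\}$ of symmetric two-tensors supported in $\overline{U'}$. The two-derivative gain in part~\ref{item:introest} of Proposition~\ref{prop:main} yields
\[
\|\mathcal T(h)\|_{H^s}
\;\lesssim\;
\|e\|_{H^{s-2}}
\,+\,\|(P_{\tilde g}-\linscalar)(h)\|_{H^{s-2}}
\,+\,\|N_{\tilde g}(h)\|_{H^{s-2}}\,,
\]
and an analogous inequality for $\mathcal T(h_1)-\mathcal T(h_2)$. The hypothesis $s>n/2$ makes $H^s$ a Banach algebra, so standard Moser-type estimates give $\|(P_{\tilde g}-\linscalar)(h)\|_{H^{s-2}}\lesssim\|\tilde g-\bb\|_{H^s}\,\|h\|_{H^s}$, which is small, and $\|N_{\tilde g}(h_1)-N_{\tilde g}(h_2)\|_{H^{s-2}}\lesssim(\|h_1\|_{H^s}+\|h_2\|_{H^s})\|h_1-h_2\|_{H^s}$. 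Taking $r$ and the data $\|\gin-\bb\|_{H^s}+\|\gout-\bb\|_{H^s}$ small enough, Banach's fixed-point theorem produces a unique solution $h$ of size comparable to the data.

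The main technical obstacle is the nonlinear Sobolev bookkeeping: one must verify that $\tilde g+h$ stays positive-definite with uniformly $H^s$-controlled inverse throughout the ball, that the coefficients of $P_{\tilde g}-\linscalar$ and the remainder $N_{\tilde g}$ satisfy the Moser estimates in $H^s$, and that the support of $\mathcal T(h)$ remains inside $\overline{U'}$ (automatic from part~\ref{item:introeq} of Proposition~\ref{prop:main}). Once this is done, smoothness of $\gsol$ when the data are smooth follows either by running the same scheme in $H^{s+k}$ for every $k\ge 0$, or, more efficiently, by observing that $h$ satisfies an elliptic equation with smooth coefficients modulo the finite-dimensional, smooth correction $\Piscalar(\cdot)$, and appealing to standard elliptic regularity on $U'$.
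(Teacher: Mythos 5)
Your proposal is correct and follows essentially the same route as the paper's (sketched) proof: cutoff-interpolate between $\gin$ and $\gout$, seek a compactly supported correction as a fixed point of a map built from $\Gintro$, and close the contraction in $H^s$ using the two-derivative gain of Proposition~\ref{prop:main}, with the boundary matching automatic from the compact support of the output of $\Gintro$. The differences — expanding around $\tilde g$ and re-splitting $P_{\tilde g}=\linscalar+(P_{\tilde g}-\linscalar)$ instead of the paper's arrangement $\pert'=-\Gintro(\linscalar(\pert)+\nonlin(\pert+\pert'))$, and applying Proposition~\ref{prop:main} on a subdomain $U'$ rather than on $\dom$ itself — are only cosmetic rearrangements of the same argument.
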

\begin{proof}
This is similar to \cite[Proof of Theorem 1.3, Step 1 and 2]{MaoOhTao},
hence we only give a sketch. 
Write $\RicciScalar(\bb+\pert)=-n(n-1)+ \linscalar(\pert) + \nonlin(\pert)$
where $\nonlin(\pert)$ is the nonlinearity, 
given by terms of quadratic and higher order.
Fix an interpolation $\pert$ between $\gin-\bb$ and $\gout-\bb$,
then make the ansatz $\gsol = \bb+\pert+\pert'$
where the correction $\pert'$ has compact support in $\dom$.
It is constructed as the solution of the fixed point  
\begin{equation}\label{eq:fixedpointequation}
\pert' =-\Gintro\big(\linscalar(\pert) + \nonlin(\pert+\pert')\big)\,.
\end{equation}
Then $\gsol$ satisfies all properties stated in the proposition.
\end{proof}

In Proposition \ref{prop:nonl}, the constant scalar curvature equation 
is only solved modulo the range of $\Piscalar$, which has dimension $n+1$.
The full equation $\RicciScalar(g)=-n(n-1)$ holds
if and only if 
the argument of $\Gintro$ in \eqref{eq:fixedpointequation}
is in the kernel of $\Piscalar$, equivalently,
if and only if for all $a=1,\dots,n+1$,
\[ 
\int_{\dom}\kappa_a\big( \linscalar(\pert) + \nonlin(\pert+\pert')\big) \,d\mu_{\bb}=0
\]
where $\kappa_a$ are the static KIDs defined in \eqref{9VII24.51}.
By \eqref{eq:kappaPpert} and the divergence theorem, 
this is equivalent to 
\begin{equation}\label{eq:intautV}
\int_{\p_{\out}\dom} V^{(a)}_i(\gout-\bb) \,dS^i_{\bb}
-
\int_{\p_{\IN}\dom} V^{(a)}_i(\gin-\bb) \,dS^i_{\bb}
= 
- \int_{\dom}\kappa_a  \nonlin(\pert+\pert')\,d\mu_{\bb}
\end{equation}
where the integrals on the left are flux integrals 
over the exterior boundary $\p_{\out}\dom$ and interior
boundary $\p_{\IN}\dom$ respectively, both in the outward direction
(in the low regularity regime one can use averaged flux integrals).

Of particular interest is the case when $\gin$ is given
and one looks for $\gout$ 
in the family of boosted Schwarzschild-AdS metrics, 
which depends on $n+1$ mass and boost parameters $\theta=(\theta_1,\dots,\theta_{n+1})$,
see e.g.~\cite{ChDelayAH}.
The idea is to apply Proposition \ref{prop:nonl}
with $\theta$ as parameters, and to then adjust the parameters
such that the $n+1$ conditions \eqref{eq:intautV} hold.
 
Since the right hand side of \eqref{eq:intautV} is small, 
it is necessary that the flux integrals of $\gin$ are approximately
equal to the flux integrals of some member of the family of
boosted Schwarzschild-AdS metrics.
If this is the case, 
and under appropriate smallness conditions,
one can then construct the exact parameters $\theta$ 
for which \eqref{eq:intautV} holds 
by reformulating \eqref{eq:intautV} as a fixed point equation for $\theta$.  
Arguments of this kind are worked out in detail in \cite{ChDelayAH} in the hyperbolic setting, and in \cite{MaoOhTao,Corvino,CC1.5,ChDelay} in the Euclidean setting. 
 A particularly simple argument (which also shows that the smallness conditions needed can be realised in some cases) can be given for parity-invariant metrics, 
see Example \ref{example:xxxxx} below.

In this manner, Proposition \ref{prop:main} can be used
to glue a constant-scalar-curvature metric $\gin$  
to a boosted Schwarzschild-AdS metric,
in such a way that the glued metric has constant scalar curvature.
This gives a simpler proof of a similar perturbative result in~\cite{ChDelayAH}, under the much weaker differentiability conditions
of Proposition \ref{prop:nonl}, and extends the analysis of~\cite{ChDelayAH} to include dimension $n=2$. 


\begin{example}[Gluing parity-invariant metrics]\label{example:xxxxx}
\label{Ex3IX24.1}
Let $Z:\uHS^n\to\uHS^n$ be the parity isometry of the hyperbolic metric 
about the point $(1,0,\dots,0)$.
In Proposition \ref{prop:nonl}, if the annulus $\Omega$ 
and the metrics $\gin$ and $\gout$ are $Z$-invariant, then 
the glued metric $g$ will be $Z$-invariant, 
by using instead of $\Gintro$ the $Z$-invariant
$\frac12 (\Gintro + Z^* \Gintro Z^*)$.
Then $n$ of the $n+1$ equations in \eqref{eq:intautV} hold
trivially, and \eqref{eq:intautV} becomes a scalar equation.
Then: 
\begin{quote}\textit{If $g_{\IN,\epsilon}$ is a one-parameter family of
$Z$-invariant constant-scalar-curvature metrics depending smoothly on 
a parameter $ \epsilon\in(-1,1)$, and with $g_{\IN,0}=\bb$ the hyperbolic metric, 
then for all $\epsilon$ sufficiently close to zero,
the metric $g_{\IN,\epsilon}$ can be glued,
with constant scalar curvature, 
to a $Z$-invariant Schwarzschild AdS-metric $\gout$.}
\end{quote}
To see this, let $g_{\out,m}$ be the unboosted and $Z$-invariant
Schwarzschild AdS-metric with mass parameter $m$, in particular
$g_{\out,0}=\bb$.
By Proposition \ref{prop:nonl}
one obtains a $Z$-invariant glued metric $g_{m,\epsilon}$ 
that depends parametrically on $m$ and $\epsilon$.
The scalar equation \eqref{eq:intautV} is of the form
$u(m,\epsilon) = 0$, where $u$ is a smooth function 
near the origin that satisfies $u(0,0)=0$. 
Furthermore $\frac{\p u}{\p m}(0,0)\neq0$ since only the 
flux integral over $\p_{\out}\dom$ contributes to this dervative.
Hence by the implicit function theorem, $u(m,\epsilon) = 0$
can be locally solved for $m=m(\epsilon)$, implying the claim.
\hfill$\Box$
\end{example}

\begin{remark}
Non-trivial  families $g_{\IN,\epsilon}$ needed in Example~\ref{Ex3IX24.1}
can be constructed by starting from any smooth $Z$-invariant family of metrics
(not conformal to $\bb$)  passing through $\bb$,
to obtain a constant-scalar-curvature family by conformal rescaling (Yamabe problem \cite{ACF}).
\hfill$\Box$
\end{remark}

 We note that it is also possible to use other families of metrics for $\gout$,
such as the family of metrics with arbitrary energy-momentum
vector constructed in \cite{CortierMass}.


Let us finally mention that given a solution operator 
for the trace-free symmetric divergence equation
with good support and differentiability properties,
then the above discussion may be extended
to the CMC general relativistic constraint equations
(instead of only Schwarzschild-AdS one then has to use Kerr-AdS).
In dimension $n=3$ such an operator is provided in \cite{Andrea},
for $n\ge3$ it will be provided in a forthcoming paper of
Isett, Mao, Oh and Tao.


%
%

\bibliographystyle{amsplain}
\bibliography{OperatorForCorvinoGluingFinal-minimal,AlbaPiotrAndrea-minimal}

\providecommand{\bysame}{\leavevmode\hbox to3em{\hrulefill}\thinspace}
\providecommand{\MR}{\relax\ifhmode\unskip\space\fi MR }
\providecommand{\MRhref}[2]{%
  \href{http://www.ams.org/mathscinet-getitem?mr=#1}{#2}
}
\providecommand{\href}[2]{#2}
\begin{thebibliography}{10}

\bibitem{ACF}
L.~Andersson, P.T. Chru\'{s}ciel, and H.~Friedrich, \emph{On the regularity of
  solutions to the {Y}amabe equation and the existence of smooth hyperboloidal
  initial data for {E}insteins field equations}, Commun.\ Math.\ Phys.
  \textbf{149} (1992), 587--612. \MR{MR1186044 (93i:53040)}

\bibitem{Bogovskii1}
M.~Bogovski\v{i}, \emph{Solution of the first boundary value problem for an
  equation of continuity of an incompressible medium}, Soviet Math. Dokl.
  \textbf{20} (1979), 1094--1098.

\bibitem{Bogovskii2}
\bysame, \emph{Solutions of some problems of vector analysis, associated with
  the operators ${\rm div}$ and ${\rm grad}$}, Trudy Seminara S. L. Soboleva
  (1980), 5--40.

\bibitem{CarlottoSchoen}
A.~Carlotto and R.~Schoen, \emph{Localizing solutions of the {Einstein}
  constraint equations}, Invent.\ Math. \textbf{205} (2016), 559--615,
  arXiv:1407.4766 [math.AP].

\bibitem{CC1.5}
P.T. Chru{\'s}ciel, J.~Corvino, and J.~Isenberg, \emph{Initial data for the
  relativistic gravitational {$N$}-body problem}, Class.\ Quantum Grav.
  \textbf{27} (2010), 222002, 6. \MR{2734092}

\bibitem{ChDelay}
P.T. Chru\'{s}ciel and E.~Delay, \emph{On mapping properties of the general
  relativistic constraints operator in weighted function spaces, with
  applications}, M\'em.\ Soc.\ Math.\ de France. \textbf{94} (2003), 1--103
  (English), arXiv:gr-qc/0301073. \MR{MR2031583 (2005f:83008)}

\bibitem{ChDelayAH}
\bysame, \emph{Gluing constructions for asymptotically hyperbolic manifolds
  with constant scalar curvature}, Commun.\ Anal.\ Geom. \textbf{17} (2009),
  343--381, arXiv:0711.1557[gr-qc]. \MR{2520913 (2011a:53052)}

\bibitem{CortierMass}
J.~Cortier, \emph{A family of asymptotically hyperbolic manifolds with
  arbitrary energy-momentum vectors}, Jour.\ Math.\ Phys. \textbf{53} (2012),
  102504, 15, arXiv:1205.1377 [math.DG]. \MR{3050585}

\bibitem{Corvino}
J.~Corvino, \emph{Scalar curvature deformation and a gluing construction for
  the {E}instein constraint equations}, Commun.\ Math.\ Phys. \textbf{214}
  (2000), 137--189. \MR{MR1794269 (2002b:53050)}

\bibitem{CorvinoSchoen2}
J.~Corvino and R.~Schoen, \emph{On the asymptotics for the vacuum {E}instein
  constraint equations}, Jour.\ Diff.\ Geom. \textbf{73} (2006), 185--217,
  arXiv:gr-qc/0301071. \MR{MR2225517 (2007e:58044)}

\bibitem{MaoOhTao}
Y.~Mao, S.-J. Oh, and Z.~Tao, \emph{{Initial data gluing in the asymptotically
  flat regime via solution operators with prescribed support properties}},
  (2023), arXiv:2308.13031 [math.AP].

\bibitem{MaoTao}
Y.~{Mao} and Z.~{Tao}, \emph{{Localized initial data for Einstein equations}},
  (2022), arXiv:2210.09437 [math.AP].

\bibitem{Andrea}
A.~N\"utzi, \emph{{A support preserving homotopy for the de Rham complex with
  boundary decay estimates}},  (2024), arXiv:2404.18005 [math.DG].

\bibitem{OhTataru}
S.-J. Oh and D.~Tataru, \emph{Local well-posedness of the {$(4 +
  1)$}-dimensional {M}axwell-{K}lein-{G}ordon equation at energy regularity},
  Ann. PDE \textbf{2} (2016), no.~1, Art. 2, 70. \MR{3462105}

\bibitem{Taylor96}
M.E. Taylor, \emph{Partial differential equations {III}}, Applied Mathematical
  Sciences, vol. 117, Springer, 1996.

\end{thebibliography}

\bigskip

  \textsc{University of Vienna}\par\nopagebreak
  \textit{Email:} \protect\url{piotr.chrusciel@univie.ac.at}\par\nopagebreak
  \textit{Homepage:} \protect\url{homepage.univie.ac.at/piotr.chrusciel}\par\nopagebreak
\vskip 2.8mm
  \textsc{University of T\"ubingen}\par\nopagebreak
  \textit{Email:} \protect\url{albachiara.cogo@uni-tuebingen.de}\par\nopagebreak
\vskip 2.8mm
  \textsc{Stanford University}\par\nopagebreak
  \textit{Email:} \protect\url{anuetzi@stanford.edu}

\end{document}